\pdfoutput=1

\documentclass[journal]{IEEEtran}

\usepackage{amsmath}
\usepackage{graphicx}
\usepackage{amsthm}
\usepackage{amssymb}
\usepackage{pgfplots}
\usepackage{nicefrac}

\def\R{{\mathbb R}}
\def\N{{\mathbb N}}
\def\x{{\mathbf x}}

\def\ie{{\it i.e., }}

\let\oldnl\nl
\newcommand{\nonl}{\renewcommand{\nl}{\let\nl\oldnl}}

\DeclareMathOperator*{\argmin}{arg\,min}
\DeclareMathOperator*{\minimize}{\text{minimize}}

\newlength
\figureheight
\newlength
\figurewidth

\graphicspath{{figures/}}

\begin{document}

\title{Variations on the CSC model}

\author{Ives Rey-Otero, Jeremias Sulam, Michael Elad
\thanks{All authors are with the Computer Science Department, Technion - Israel Institute of Technology}}

\maketitle

\begin{abstract}

    Over the past decade, the celebrated sparse representation model has
    achieved impressive results in various signal and image processing tasks.
    A convolutional version of this model, termed convolutional sparse coding
    (CSC), has been recently reintroduced and extensively studied.
    CSC brings a natural remedy to the limitation of typical sparse enforcing
    approaches of handling global and high-dimensional signals by local,
    patch-based, processing.
    While the classic field of sparse representations has been able to cater
    for the diverse challenges of different signal processing tasks by
    considering a wide range of problem formulations, almost all available
    algorithms that deploy the CSC model consider the same $\ell_1 - \ell_2$
    problem form.
    As we argue in this paper, this CSC pursuit formulation is also too
    restrictive as it fails to explicitly exploit some local characteristics of
    the signal.
    This work expands the range of formulations for the CSC model by
    proposing two convex alternatives that merge global norms with local
    penalties and constraints.
    The main contribution of this work is the derivation of efficient and
    provably converging algorithms to solve these new sparse coding formulations.

\end{abstract}

\begin{IEEEkeywords}
sparse representation, convolutional sparse coding, parallel proximal algorithm, convex optimization.
\end{IEEEkeywords}

\section{Introduction}\label{sec:intro}

The sparse representation model~\cite{bookSparse} is a central tool for a wide
range of inverse problems in image processing, such as
denoising~\cite{elad2006image,mairal2009non},
super-resolution~\cite{romano2014single,yang2010image},
image deblurring~\cite{yu2012solving,dong2013nonlocally} and more.
This model assumes that natural signals can be represented as a sparse linear
combination of a few columns, called atoms, taken from a matrix called
dictionary.
The problem of recovering the sparse decomposition of a given signal over a
(typically overcomplete) dictionary is called \emph{sparse coding} or pursuit. Such an
inverse problem is usually formulated as an optimization objective seeking to
minimize the $\ell_0$ pseudo-norm, or its convex relaxation, the $\ell_1$-norm,
while allowing for a \emph{good}\footnote{The desired representation accuracy,
or fitting, is problem dependent and it varies for different applications.}
signal reconstruction.
An effective deployment of the sparse representation model calls for the
identification of a dictionary that suites the data treated.
This is known as the \emph{dictionary learning} problem, of finding the best
sparsifying dictionary that fits a large set of signal
examples~\cite{aharon2006rm,engan1999method}.

Alas, when it comes to the need to process global high-dimensional signals
({\it e.g.,} complete images), the sparse representation model hits strong barriers.
Dictionary learning is completely intractable in such cases due to its too high
memory and computational requirements. In addition, the global pursuit fails to
grasp local varying behaviors in the signal, thus leading to inferior treatment
of the overall data.
Because of these reasons, it has become a common practice to split the global
signal into small overlapping blocks, or patches, identify the dictionary that
best models these patches, and then sparse code and reconstruct each of these
blocks independently before averaging them back into a global
signal~\cite{elad2006image}.
Although practical and effective~\cite{mairal2008sparse}, this patch-based
strategy is inherently limited since it does not account for the natural
dependencies that exist between adjacent or overlapping patches, and therefore it
cannot ensure a coherent reconstruction of the global signal
\cite{sulam2015expected,papyan2016multi}.

This limitation of the patch-based strategy has been tackled in two ways.
One way maintains the patch-based strategy while extending it by modifying the
objective so as to bridge the gap between local prior and global
reconstruction.
This is achieved either by taking into account the self-similarities of natural
images~\cite{dong2013nonlocally,mairal2009non}, by exploiting their multi-scale
nature~\cite{papyan2016multi,mairal2008learning,sulam2014image}, or by
explicitly requiring the reconstructed global signal to be consistent with the
local prior~\cite{zoran2011learning,sulam2015expected}.
The second way consists in dropping the heuristic patch-based strategy altogether in
favor of global, yet computationally tractable and locally-aware, models. Such is the case of the CSC
\cite{grosse2012shift,thiagarajan2008shift,rusu2014explicit}, allowing
the pursuit to be performed directly on the global signal by imposing a
specific banded convolutional structure on the global dictionary.
This implies, naturally, that the signal of interest is a
superposition of a few local atoms shifted to different positions.
And so, while the CSC is a global model, it has patch-based flavor to it and in addition, learning its dictionary is within reach~\cite{bristow2013fast}.

Recent years have seen a renewed interest in the CSC model, including a
thorough theoretical analysis along with new pursuit and dictionary learning
algorithms for it, and its deployment to problems such as image inpainting, super-resolution, dynamic range
imaging, and pattern classification~
\cite{bristow2013fast,heide2015fast,kong2014fast,wohlberg2014efficient,gu2015convolutional,yellin2017blood,serrano2016convolutional}.
Nevertheless, the research activity on the CSC model is still in its infancy.
In particular, while the classic sparse representation model has assembled an
extensive toolbox of problem formulations, diverse sparsity promoting penalty
functions along with countless pursuit algorithms (with greedy, relaxation and
Bayesian alternatives), most pursuit approaches to recover the CSC
representation $\Gamma$ from a global signal $X$ and a convolutional dictionary
$D$ rely on minimizing the same $\ell_2 - \ell_1$ objective, namely
\begin{equation}
    \minimize_\Gamma \frac1{2} \|X - D \Gamma \|^2_2 + \lambda \|\Gamma\|_1,
    \label{}
\end{equation}
where $\lambda$ is a Lagrangian parameter.
As we show in this work, this problem formulation is too restrictive and dull.
Indeed, both terms in this formulation, the $\ell_2$ reconstruction term and
the $\ell_1$ sparsity promoting penalty, are global quantities - as is the
scalar Lagrangian parameter $\lambda$ that controls the trade-off between them.
This contrasts with state-of-the-art patch-based methods where sparsity is
controlled locally, typically through a per-patch constraint on the maximum
number of non-zeros or on the \emph{maximal allowed patch
error}~\cite{elad2006image}.
While one would hope for the CSC pursuit to optimally scatter non-zero
coefficients in a way that best serves the signal, we unfortunately observe that
this not to be the case in practice.
Instead, solutions to the above problem typically exhibit sparsity patterns that
have little relation with the signal local complexity.
This calls for alternative problem formulations where local sparsity and
local representation errors are explicitly taken into account in the global
model.
An additional motivation for an alternative formulation of the CSC pursuit stems from the findings of
\cite{papyan2017working}, which is the first work to derive a theoretical
analysis framework for the CSC model.
In order to leverage the convolutional structure in this pursuit problem, the
authors in~\cite{papyan2017working} advocate for a new notion of local
sparsity.
In particular, they provide recovery and stability guarantees conditioned
on the sparsity of each representation portion responsible for encoding
individual patches, as opposed to the traditional global $\ell_0$ norm.
The CSC pursuit formulations proposed in the present work aim at explicitly
controlling the sparsity level in these portions of the representation vectors,
called stripes.
The first formulation employs the $\ell_{1,\infty}$ norm as the sparsity promoting
function, providing a convex relaxation of the $\ell_{0,\infty}$ pseudo-norm
that was introduced in~\cite{papyan2017working} and explored further in~\cite{plaut2018mpcsc,wohlberg2017convolutional}.
The second formulation controls the sparsity of the stripes by
considering the maximum reconstruction error on each patch simultaneously, via
an $\ell_{2,\infty}$ norm. Such an approach is motivated by 
patch averaging techniques that have been successfully deployed for
denoising and other inverse problems~\cite{elad2006image,mairal2008sparse}.
We derive, for each of these two formulations, simple, efficient, and provably converging algorithms.

The remainder of the paper is organized as follows.
Section~\ref{sec:conv:sparse:coding} reviews common notations and definitions
for the CSC model.
Section~\ref{sec:need:structured} examines the behavior of the classic
$\ell_2 - \ell_1$, in particular its tendency to overuse simple atoms and
encode the signal by aggregation of these coarse atoms along with a spatial
instability of the global representation.
We then propose two alternate formulations, the $\ell_2 - \ell_{1,\infty}$ and
$\ell_{2,\infty} - \ell_1$ in
Section~\ref{sec:l1inf} and Section~\ref{sec:l2inf}, respectively.
Both sections focus on the derivation of algorithms to solve the
respective formulations along with experiments to illustrate their behavior and
performance.
Section~\ref{sec:conclusion} contains a final discussion.

\section{Convolutional Sparse Coding}
\label{sec:conv:sparse:coding}

\begin{figure}[!t]
	\centering
    \includegraphics[trim = 250 50 150 50, width=0.6\textwidth]{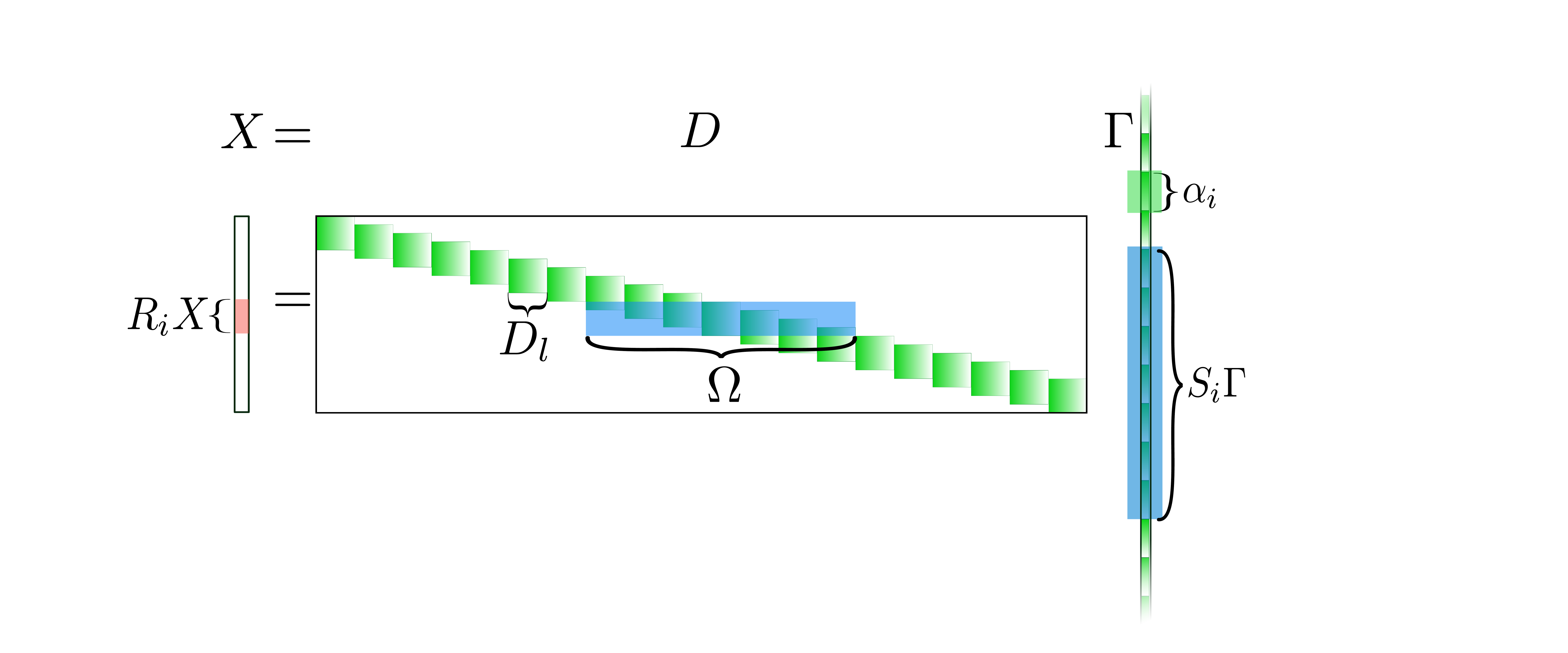}
    \caption{
        Illustration of the CSC model for the $1$D case.
        At the global scale, the image $X$ can be decomposed into the product
        of the global convolutional dictionary $D$ and a global sparse
        representation $\Gamma.$
        At the patch scale, the patch $R_i X$ can be decomposed into the
        product of the stripe dictionary $\Omega$ and the stripe representation vector $S_i\Gamma$.
    }
    \label{fig:CSC:notations}
\end{figure}

This work uses the terminology first introduced in~\cite{papyan2017working}.
The CSC model assumes that an image can be decomposed as $X =
D\Gamma$.
An image of size $H\times W$ is represented in its vectorized form as a
vector $X$ of length $N=HW$ and the corresponding global convolutional
dictionary $D$ is of size $ N \times Nm$.
$D$ is built as the concatenation of $m$ (block-) circulant matrices of size
$N\times N$, each representing one convolution.
These convolutions employ small
support filters of size $n\times n$, thus causing the above-mentioned circulant
matrices to be narrowly banded.
Another way to describe D is by combining  all the shifted versions of a local
dictionary $D_l \in  \R^{n^2 \times m}$ composed of the $m$ vectorized 2D filters.
Such construction is best illustrated by expressing the global signal in
terms of the local dictionary, $X = \sum_{i=1}^N R_i^T D_l \alpha_i$, where
$R_i^T$ is the operator that positions the patch $D_l\alpha_i$ in the $i^{th}$ location
and pads the rest of the entries with zeros.
The quantity $D_l\alpha_i$ is called a slice, with $\alpha_i$ being the portion
of the sparse representation vector $\Gamma$, called needle, that encodes the slice.
It is important to stress that slices are not patches but rather simpler components
that are combined to form patches.

To better understand which parts of the dictionary $D$ and of the sparse vector
$\Gamma$ represent an isolated patch, it is convenient to consider the patch
extraction operator $R_i$ and apply it to the system of equations $X = D \Gamma$.
This yields the system $R_i X = R_i D \Gamma$ consisting of the $n^2$ rows
relating to the patch pixels.
Due to the banded structure of $D$, the extracted rows $R_i D$ contain only a
subset of $(2n-1)^2 m$ columns that are not trivially zeros.
Denoting by $S_i^T$ the operator that extracts such columns and rewriting our
system of equations as $R_i X = R_i D S_i^T S_i \Gamma$ make two interesting
entities come to light.
The first is the vector $S_i\Gamma$, a subset of $(2n-1)^2 m$
coefficients of $\Gamma$ called the \emph{stripe} that entirely encodes the patch
$R_iX$.
The second entity is the sub-matrix $\Omega = R_i D S_i^T \in \R^{n^2 \times
(2n-1)^2 m}$, called the \emph{stripe dictionary}, which multiplies the stripe vector
$S_i\Gamma$ to reconstruct the patch.
Figure~\ref{fig:CSC:notations} summarizes these definitions and notations,
employed in the remainder of the paper.

\section{The need for structured sparsity}
\label{sec:need:structured}

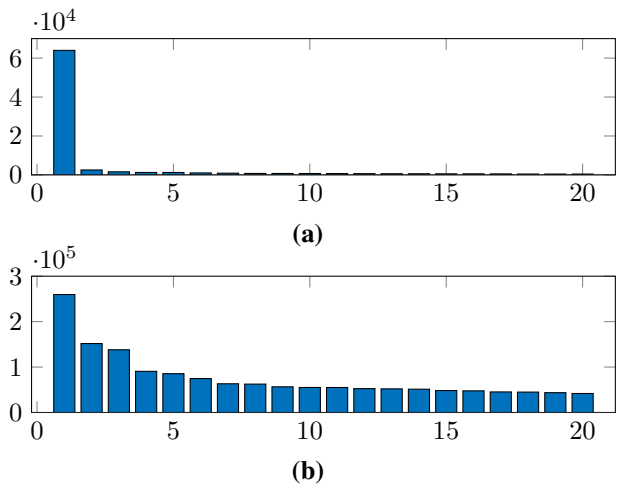
\begin{figure}[!t]
	\centering
	\setlength\figurewidth{0.45\textwidth}
	\setlength\figureheight{0.10\textwidth}
%
%
\definecolor{mycolor1}{rgb}{0.00000,0.44700,0.74100}%
\begin{tikzpicture}

\begin{axis}[%
width=0.951\figurewidth,
height=\figureheight,
at={(0\figurewidth,0\figureheight)},
scale only axis,
bar shift auto,
xmin=-0.2,
xmax=21.2,
ymin=0,
ymax=70000,
axis background/.style={fill=white}
]
\addplot[ybar, bar width=8, fill=mycolor1, draw=black, area legend] table[row sep=crcr] {%
1	63972\\
2	2549\\
3	1596\\
4	1278\\
5	1262\\
6	1027\\
7	900\\
8	767\\
9	756\\
10	734\\
11	712\\
12	651\\
13	616\\
14	600\\
15	564\\
16	563\\
17	528\\
18	465\\
19	464\\
20	453\\
};
\addplot[forget plot, color=white!15!black] table[row sep=crcr] {%
-0.2	0\\
21.2	0\\
};
\end{axis}
\end{tikzpicture}%

    {\bf (a)}

%
%
\definecolor{mycolor1}{rgb}{0.00000,0.44700,0.74100}%
\begin{tikzpicture}

\begin{axis}[%
width=0.951\figurewidth,
height=\figureheight,
at={(0\figurewidth,0\figureheight)},
scale only axis,
bar shift auto,
xmin=-0.2,
xmax=21.2,
ymin=0,
ymax=300000,
axis background/.style={fill=white}
]
\addplot[ybar, bar width=8, fill=mycolor1, draw=black, area legend] table[row sep=crcr] {%
1	259668\\
2	151740\\
3	138030\\
4	90610\\
5	85463\\
6	74653\\
7	63192\\
8	62499\\
9	56539\\
10	55051\\
11	55043\\
12	52543\\
13	52026\\
14	51323\\
15	48329\\
16	47668\\
17	45282\\
18	44937\\
19	43617\\
20	41820\\
};
\addplot[forget plot, color=white!15!black] table[row sep=crcr] {%
-0.2	0\\
21.2	0\\
};
\end{axis}
\end{tikzpicture}%

    {\bf (b)}
    \caption{
        Representation test for the image \texttt{barbara}.
        Number of non-zero coefficients for each of the 20 most
        commonly used atoms in each representation.
        {\bf (a)} CSC $\ell_1 - \ell_2$ formulation
        {\bf (b)} sparse coding every patch individually via OMP.
        The sparsity level in both representations is set so as to reach a
        reconstruction error of $30$dB.
        In both cases, the most popular atom is the DC component.
        The $\ell_1 - \ell_2$ CSC formulation leads to one atom being used
        predominantly.
        Since each patch is reconstructed independently, the patch-based
        reconstruction leads to a denser global representation and uses diverse
        local atoms.
    }
    \label{fig:histo}
\end{figure}

To illustrate the problematic behavior of the CSC model in its most common
formulation, we examine the structure obtained in the global
representation vector in the following experiment.
A natural image $X$ is sparse coded using the $\ell_1 - \ell_2$
formulation\footnote{The $\ell_1 - \ell_2$ minimization is carried out
using the slice-based algorithm proposed in~\cite{papyan2017convolutional}.},
resulting in the decomposition $D\Gamma_\text{CSC}$.
The local dictionary considered, $D_l$, is the DCT dictionary
($n=8$, $m=64$) and $\lambda$ is set so as
to reach a reconstruction error of $30$~dB.
For comparison, fully-overlapping patches of the same image are sparse coded
individually using the Orthogonal Matching Pursuit (OMP) with the same local
dictionary.
The error threshold in the OMP is set so as to achieve a representation error
of $30$~dB after patch-averaging, so as to match the CSC experiment.
This results in a set of needles, one per each patch, and these can be
concatenated into a global sparse representation $\Gamma_\text{OMP}$ that has the same length and structure as $\Gamma_\text{CSC}$.

Figures~\ref{fig:histo} {\bf (a)} and {\bf (b)}  depict how often the first 20 atoms in the
local dictionary are used in $\Gamma_\text{CSC}$ and $\Gamma_\text{OMP}$ respectively.
In the CSC representation vector, one atom is predominantly used, namely the DC
atom.
In fact, most of the needles in $\Gamma_\text{CSC}$  contain at most only one
active atom, and many of them (about $70\%$) remain completely empty.
Note that while the OMP algorithm in the patch-based approach encodes the patches using the local dictionary
atoms alone, the CSC pursuit encodes the entire image using the atoms as well as
their shifts.
This allows fewer local atoms to be used in the CSC representation.
Indeed, the system of equations $X = D\Gamma$ admits an infinite number of
solutions even with a local dictionary $D_l$ containing as few as two atoms,
which would be, on the other hand, insufficient to reliably reconstruct individual
patches.
What these plots show in fact is that, in the CSC model, the juxtaposition of
the simplest atoms shifted at different locations accounts for most of its
expressiveness.
This tendency leads to a series of problems.
On one hand, this is of importance for any dictionary learning algorithm that builds on
the $\ell_1 - \ell_2$ formulation, since the predominant use of one filter
prevents most atoms from being properly updated and learned.

Another tendency of the CSC model we would like to expose is the fact  that the
global representation obtained is spatially unstable.
By putting too much emphasis on the spatial arrangement of slices, the sparse
representation at one point of the image is overly affected by a distant
structure in the image -- as happens, for example, with the image borders.
To demonstrate this, let us consider two image crops shifted from one another by a few
pixels. The impact of the border location on the spatial distribution of
non-zero coefficients is illustrated in
Figure~\ref{fig:struct:spatial:instability}, which shows the difference between
the respective sparsity maps (with proper compensation of the shift).
Oddly, the global distribution of atoms in the image is globally affected most
noticeably in smooth regions of the image.

\begin{figure}[!t]
	\centering

    \begin{minipage}[t]{.23\textwidth}
        \begin{centering}
            
            \includegraphics[width=\textwidth]{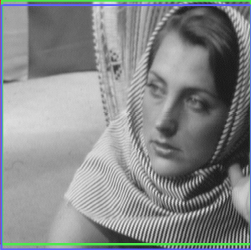}
        \end{centering}
    \end{minipage}
    \begin{minipage}[t]{.23\textwidth}
        \begin{centering}
            
            \includegraphics[width=\textwidth]{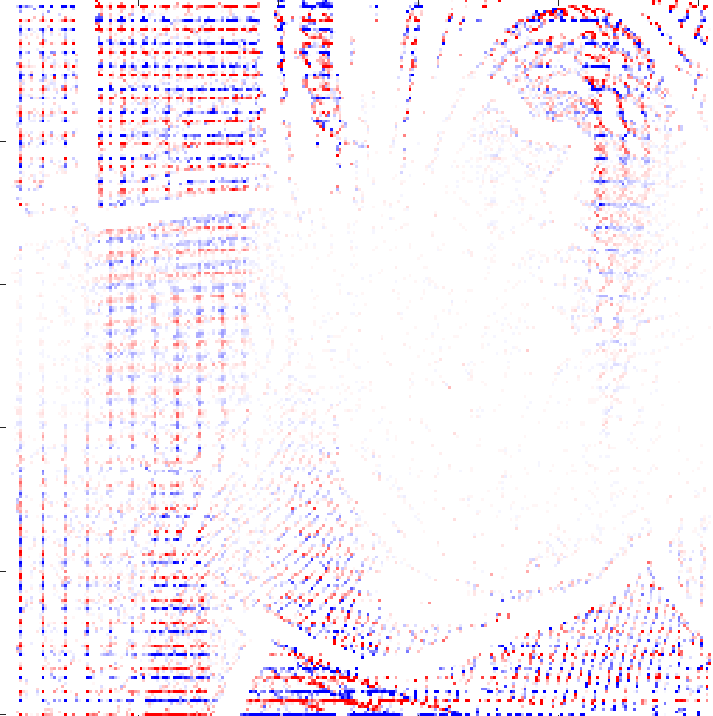}

            \setlength\figurewidth{1.1\textwidth}
	        \setlength\figureheight{0.01\textwidth}
%
%
\begin{tikzpicture}

\begin{axis}[%
width=0.768\figurewidth,
height=\figureheight,
at={(0\figurewidth,0\figureheight)},
scale only axis,
point meta min=-1.5,
point meta max=1.5,
axis on top,
xmin=0.5,
xmax=258.5,
y dir=reverse,
ymin=0.5,
ymax=252.5,
axis line style={draw=none},
ticks=none,
colormap={mymap}{[1pt] rgb(0pt)=(0,0,1); rgb(31pt)=(1,1,1); rgb(32pt)=(1,1,1); rgb(63pt)=(1,0,0)},
colorbar horizontal
]
\end{axis}
\end{tikzpicture}%

        \end{centering}
    \end{minipage}
    \caption{
        Spatial instability of the CSC sparse representation.
        {\bf (a)} We consider two crops setups (in green and blue) only differing
        by a vertical shift of 4 pixels.
        We sparse code each sub-image with the convolutional dictionary based
        on the DCT.
        {\bf (b)}
        Sparsity difference: each pixel in the figure corresponds to the
        difference between the $\ell_1$-norm of the needles in each
        representation (after shifting back the representation by 4 pixels so
        as to consider the needles that are in both representations).
        Note that on smooth regions, representation is spatially sparse and
        follows a grid pattern.
        The $\ell_1 - \ell_2$ formulation leads to juxtaposition of slices,
        whose spatial arrangement is sensitive to the smallest distant
        variations of the signal.
    }
    \label{fig:struct:spatial:instability}
\end{figure}

Note that it is common in practice to deploy the CSC model not directly on the
image itself but rather after applying a local mean subtraction and contrast normalization of the
signal.
This has the effect of mitigating, to some extent, the spatial instability to large distance
interaction by breaking the connections between distant structure.
However, this does not prevent the inherent tendency of the $\ell_1 - \ell_2$
global formulation to use too few atoms and compensating for this by aggregating
overlapping shifts.
We will see that by anchoring the CSC pursuit locally, as in the proposed
alternate formulations, it is possible to get hold of such tendency.

\section{The $\ell_2 - \ell_{1,\infty}$ CSC formulation}
\label{sec:l1inf}

The first alternate formulation that we explore drops the global $\ell_1$ as a
sparsity promoting penalty and uses instead a mixed norm function, adding an
explicit and local control of sparsity.
This is motivated by the work in~\cite{papyan2017working}, whose analysis
centers around a new notion of local sparsity, the $\ell_{0,\infty}$. This
measure, instead of quantifying the total number of non-zeros in a vector,
reports the $\ell_0$ norm of the \emph{densest} stripe: 
\begin{equation}
\|\Gamma\|_{0,\infty} = \max_i \|S_i \Gamma \|_0.
\end{equation}
Such a localized norm is a somewhat more appropriate measure of sparsity in the
convolutional setting, since with it one is able to significantly improve on
the theoretical guarantees for the CSC model \cite{papyan2017working}.
Although that work established that the $\ell_2 - \ell_1$ formulation
approximates the solution to an $\ell_{0,\infty}$ problem, it also conjectured
that further improvement could be achieved by considering a new  $\ell_{1,\infty}$-norm.
This norm, defined as $
\|\Gamma\|_{1,\infty} = \max_i \|S_i \Gamma \|_1$, will be the center of our
current discussion:  the $\ell_2 - \ell_{1,\infty}$ formulation,
\begin{equation}
    \min_\Gamma \frac1{2} \|X - D \Gamma \|^2_2 + \lambda \|\Gamma\|_{1,\infty}.
    \label{eq:csc:l1inf}
\end{equation}
The $\ell_{1,\infty}$ is nothing but a mixed norm on the global representation $\Gamma$.
Mixed-norms have been commonly used in signal processing to promote various types of
structure in the sparsity pattern~\cite{kowalski2009sparse}.
In the context of the CSC model, using this mixed norm is expected to promote a
distribution of non-zero coefficients that makes use of more diverse local atoms
and is less affected by the global attributes of the image.

This formulation, in fact, first appeared in the work
of~\cite{wohlberg2017convolutional}, which proposed a global ADMM formulation
to iteratively minimize the loss in Equation~\eqref{eq:csc:l1inf}.
Unfortunately, one of the steps in their proposed iterative process requires of
yet another ADMM solver, resulting in a generally inefficient algorithm.
The complexity of this approach is aggravated by the need of a multi-block
ADMM, which requires careful parameter tuning and does not enjoy the
convergence properties of the standard ADMM.

\subsection{The proposed algorithm}

Recalling the $\ell_2 - \ell_{1,\infty}$ formulation in
Equation~\eqref{eq:csc:l1inf}, consider $N$ splitting variables
$\{\gamma_i\}_{i=1}^N$, so as to rewrite the problem  equivalently as
\begin{equation}
    \begin{aligned}
        & \minimize_{\substack{\Gamma, \{ \gamma_i \} }}
        & &   \frac1{2} \| Y - D\Gamma \|_2^2 + \lambda \max_i \| \gamma_i \|_1 \\
        & \text{ subject to }
        & & \forall i, \; \gamma_i = S_i\Gamma.
    \end{aligned}
    \label{eq:csc:l1inf:ADMM:standard:form}
\end{equation}
This constrained minimization problem is handled by considering its augmented Lagrangian:
\begin{equation}
    \begin{aligned}
        & \minimize_{\substack{\Gamma, \{ \gamma_i \},   \{ u_i \},}}
        & &   \frac1{2} \| Y - D\Gamma \|_2^2 + \lambda \max_i \| \gamma_i \|_1 \\ 
         &&& + \frac{\rho}{2} \sum_i \| \gamma_i - S_i \Gamma + u_i \|^2_2,
    \end{aligned}
    \label{eq:csc:l1inf:ADMM:augmented:Lagragian}
\end{equation}
where $ \{ u_i \}^N_{i=1}$ denote the scaled dual-variables associated with
each equality constraint $\gamma_i = S_i\Gamma$.
The ADMM algorithm~\cite{boyd2011distributed} minimizes this augmented
Lagrangian by alternatively updating the variable $\Gamma$ and the
set of splitting variables $\{ \gamma_i \}_{i=1}^N$.
Formally, an iteration of the ADMM algorithm consists of the following steps:
\begin{align}
    %
    %
    \Gamma^{(k)} :=&  \argmin_{\Gamma}
    \begin{aligned}[t]
        &\frac1{2} \| Y - D\Gamma \|_2^2 \label{admm:1} \\
      +& \frac{\rho}{2} \sum_i \| \gamma_i^{(k-1)} - S_i \Gamma + u_i^{(k-1)}\|_2^2.
    \end{aligned}\\
    %
    %
    \{\gamma_i^{(k)}\}  :=& \argmin_{ \{\gamma_i\} }
    \begin{aligned}[t]
        & \lambda \max_i \|\gamma_i\|_1  \label{admm:2}  \\
        +&  \frac{\rho}{2} \sum_i \| \gamma_i - S_i \Gamma^{(k)} + u_i^{(k-1)}\|_2^2.
    \end{aligned}\\
    %
    %
    u_i^{(k)} :=& \quad u_i^{(k-1)} + \gamma_i^{(k)} - S_i \Gamma^{(k)}.    \label{admm:3} \\ \nonumber
\label{}
\end{align}
The update of $\Gamma$ in Equation~\eqref{admm:1} is straightforward, as it is a
least-square minimization that boils down to solving the linear system of equations
\begin{align}
    \left( D^T D + \rho \sum_i S_i^T S_i \right) \Gamma =
    \begin{aligned}[t]
        & D^T Y \\ %
        &+ \rho \sum_i S_i^T(\gamma_i + u_i).
    \end{aligned}
    \label{eq:update:gamma:least:square}
\end{align}
Bearing in mind that fast (possibly GPU) implementations are available for the convolution
$D^T$ and the transpose convolution $D$, and using the fact that $\sum_i S_i^T
S_i  = (2n-1)^2 I$, this \emph{regularized} least-square minimization can be carried
out efficiently and reliably via a few iterations of the conjugate gradient
method~\cite{kelley1999iterative}.

The updates of the variables $\{ \gamma_i \}^N_{i=1}$ in
Equation~\eqref{admm:2} are seemingly more complicated, due to the max
operation between the different stripes and the fact that they overlap.
To make it more manageable, we cast the Problem~\eqref{admm:2} in
epigraph form as
\begin{equation}
    \begin{aligned}
        & \minimize_{\substack{ \{ \gamma_i \}, t}}
        & & \lambda t  + \frac{\rho}{2} \sum_i \| \gamma_i - S_i \Gamma^{(k+1)} + u_i^{(k)}\|_2^2,\\
        & \text{subject to}
        & & \forall i, \quad  \|\gamma_i\|_1 \leq t.
    \end{aligned}
    \label{eq:admm:2:epigraph}
\end{equation}
Here, the initial problem with variables $\{\gamma_i \}_{i=1}^N $ has just been
replaced with an equivalent minimization over variables $\{\gamma_i \}_{i=1}^N$
and $t$.
Note that, for a fixed value of variable $t$, this new objective
in Equation~\eqref{eq:admm:2:epigraph} is now separable in the variables $ \{ \gamma_i
\}_{i=1}^N$. 
More precisely, it can be broken down into $N$ separate minimization problems
\begin{equation}
    \begin{aligned}
        \bar \gamma_i (t) := & \argmin_{\substack{\gamma_i}}
        & &  \| \gamma_i - S_i \Gamma^{(k)} + u_i^{(k-1)}\|_2^2,\\
        & \text{subject to}
        & &  \|\gamma_i\|_1 \leq t.
    \end{aligned}
    \label{eq:admm:2:t:fixed}
\end{equation}
Each of these is simply a projection onto the
$\ell_1$-ball~\cite{duchi2008efficient} that can be performed via the shrinkage operator\footnote{
$\mathcal{S}_\lambda(\x)$ denotes the shrinkage operator, formally
%
    $ \mathcal{S}_\lambda(\x) = \text{sign}(\x) \odot \text{max} \left( |\x| - \lambda, 0 \right),$
with $\odot$ denoting the element-wise product.}:
\begin{equation}
    \bar \gamma_i(t) = \mathcal{S}_{\lambda^\ast} \left( S_i \Gamma^{(k)} - u_i^{(k-1)}  \right),
    \label{}
\end{equation}
where the shrinkage parameter $\lambda^\ast$ can be efficiently estimated by
sorting the vector's coefficients and computing over them a cumulative sum (see
\cite{duchi2008efficient} for details).

In this way, solving the initial problem~\eqref{admm:2} boils down to finding
the optimal $t$ leading to the minimum of the objective, namely
$\{\gamma_i^{(k)} \}_{i=1}^N = \{\gamma_i(t^\ast) \}_{i=1}^N$ with
\begin{equation}
    t^\ast := \argmin_{t} \left( \lambda t + \sum_i \| \bar \gamma_i(t) - S_i \Gamma^{(k)}  + u_i^{(k-1)} \|^2_2 \right).
    \label{eq:sum:shortest:distance}
\end{equation}
As a sum of an affine function and squared  distances to the $\ell_1$ ball of
radius $t$, the previous objective is a convex function of $t$.
Indeed, the distance to the $\ell_1$ ball is a convex function of the radius $t$
(see Proposition~\ref{prop:conv} in Appendix~\ref{sec:appendix}).
and it can therefore be minimized efficiently via a simple binary-search.

This simple algorithm, by not involving an over-sensitive Lagrange multiplier
setting, and by enjoying the convergence properties of the standard ADMM
compares favorably with the method described
in~\cite{wohlberg2017convolutional}.

\subsection{Experiments}

\begin{figure}[!t]
	\centering
	\setlength\figurewidth{0.14\textwidth}
	\setlength\figureheight{0.14\textwidth}

    \begin{minipage}{.22\textwidth}
        \begin{centering}
        \includegraphics[width=0.8\textwidth]{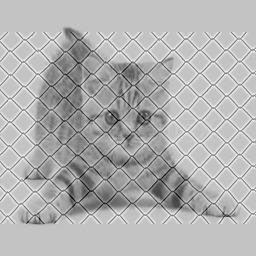}

            \texttt{cat}

        \end{centering}
    \end{minipage}
    \begin{minipage}{.22\textwidth}
        \begin{centering}
        \includegraphics[width=0.8\textwidth]{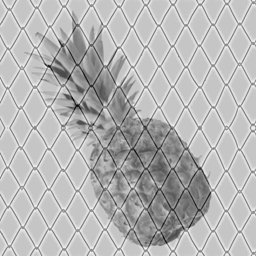}

            \texttt{pineapple}

        \end{centering}
    \end{minipage}

    \vspace{1em}

    \begin{minipage}{.22\textwidth}
        \begin{centering}
        \includegraphics[width=0.8\textwidth]{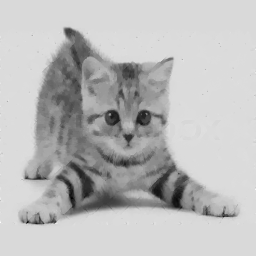}

        \includegraphics[width=0.8\textwidth]{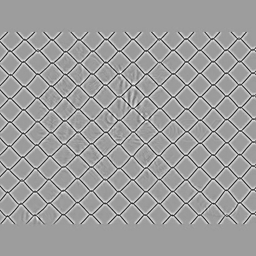}

            $\ell_2 - \ell_{1}$

        \end{centering}
    \end{minipage}
    \begin{minipage}{.22\textwidth}
        \begin{centering}
        \includegraphics[width=0.8\textwidth]{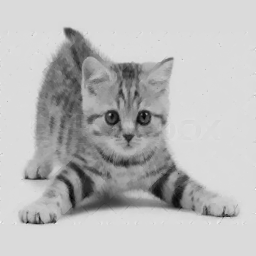}

        \includegraphics[width=0.8\textwidth]{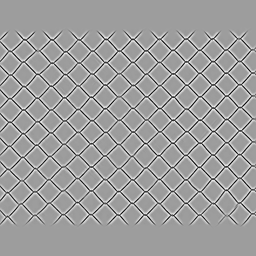}

            $\ell_2 - \ell_{1,\infty}$

        \end{centering}
    \end{minipage}

    \vspace{1em}

    \begin{minipage}{.22\textwidth}
        \begin{centering}

        \includegraphics[width=0.8\textwidth]{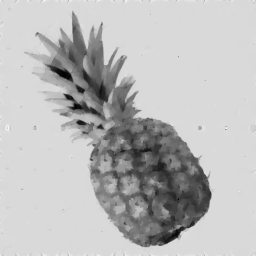}

        \includegraphics[width=0.8\textwidth]{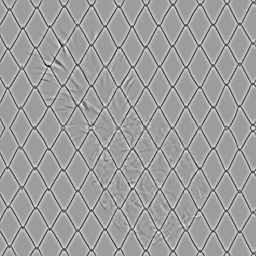}

            $\ell_2 - \ell_{1}$

        \end{centering}
    \end{minipage}
    \begin{minipage}{.22\textwidth}
        \begin{centering}

        \includegraphics[width=0.8\textwidth]{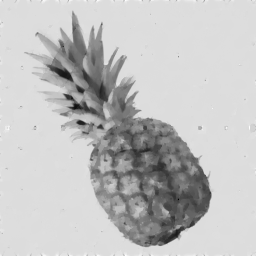}

        \includegraphics[width=0.8\textwidth]{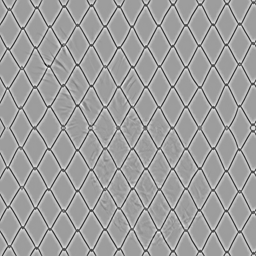}

            $\ell_2 - \ell_{1,\infty}$

        \end{centering}
    \end{minipage}
    \caption{
        Noiseless texture-cartoon separation.
        Comparing the $\ell_2 - \ell_{1,\infty}$ and $\ell_2 - \ell_{1}$ formulations.
        The input images consist of the test image \texttt{cat} and
        \texttt{pineapple}.
    }
    \label{fig:cartoon:texture}
\end{figure}

We illustrate the $\ell_2 - \ell_{1,\infty}$ formulation on the texture-cartoon
separation task.
This problem consists in decomposing an input image $X$ into a piecewise smooth
component (cartoon) $X_c$ and a texture component $X_t$ such that $X = X_c + X_t$.
The typical prior for the cartoon component $X_c$ is based on the total variation
norm, denoted $\|X_c \|_\text{TV}$, which penalizes oscillations.
In addition, we propose to assume that the texture component $X_t$ admits a
decomposition $X_t = D_t\Gamma$ where $D_t$ is a convolutional texture
dictionary and $\Gamma$ is the solution of the $\ell_2 - \ell_{1,\infty}$ CSC
formulation.
Under these assumptions, the task of texture and cartoon separation boils
down to a minimization problem over three variables: the cartoon component
$X_c$, the CSC representation $\Gamma$ and a convolutional texture dictionary
$D_t$, namely
\begin{equation}
    \minimize_{\Gamma, D_t, X_c}  \frac{1}2 \| X - D_t\Gamma - X_c \|^2_2  + \lambda \|\Gamma \|_{1,\infty} + \zeta \|X_c\|_\text{TV},
    \label{eq:csc:l1inf:cartoon:texture}
\end{equation}
with parameter $\zeta$ controling the level of TV regularization penalizing oscillations in $X_c$.
Such minimization is carried out iteratively in a block-coordinated manner
until convergence. Each iteration consists of the three following steps:
\begin{align}
    %
    %
    X_c^{(k+1)} :=&  \argmin_{X_c}
    \begin{aligned}[t]
        &\frac1{2} \| X - D_t^{(k)} \Gamma^{(k)} -X_c \|_2^2  \label{eq:cartoon:1}  \\
        &+ \zeta \| X_c  \|_\text{TV}
    \end{aligned}\\
    %
    %
    \Gamma^{(k+1)}  :=& \argmin_{\Gamma}
    \begin{aligned}[t]
        &\frac1{2} \| X - D_t^{(k)} \Gamma -X_c^{(k+1)} \|_2^2  \label{eq:cartoon:2} \\
        & + \lambda \|\Gamma \|_{1,\infty}
    \end{aligned}\\
    %
    %
    D_t^{(k+1)}  :=& \argmin_{D_t} \frac1{2} \| X - D_t \Gamma^{(k+1)} -X_c^{(k+1)} \|_2^2. \label{eq:cartoon:3} 
\end{align}
A TV denoiser\footnote{The TV denoiser used here is the publicly available
implementation of~\cite{chan2011augmented}.} is used to solve
Problem~\eqref{eq:cartoon:1} while Problem~\eqref{eq:cartoon:2} relies on our
$\ell_2 - \ell_{1,\infty}$ solver.
For the dictionary update, one option is to use a standard patch-based
dictionary learning such as K-SVD using overlapping patches as
training sets and the needles of the current $\Gamma$ estimate.
However this would not be consistent with the CSC model. Indeed, the patch
would then be assumed to stem from the local dictionary alone, disregarding all the
contributions of shifted atoms to its reconstruction.
We adopt instead a more coherent alternative that was recently proposed
in~\cite{plaut2018mpcsc} in which standard dictionary update procedures are
adapted to a convolutional setting and carried out via conjugate gradient
descent~\cite{kelley1999iterative} in conjunction with fast convolution computations.
The proposed method is applied to the test images \texttt{cat} and
\texttt{pineapple}, the results of our method are shown in Figure~\ref{fig:cartoon:texture} along with the results
from the $\ell_1 - \ell_2$ based method in~\cite{papyan2017convolutional}

\section{The $\ell_{2,\infty} - \ell_1$ CSC formulation}
\label{sec:l2inf}

We move on to consider our second formulation, of explicitly incorporating a local
control on the CSC model. This is inspired by the patch-based strategy for image denoising and other inverse problems.
Recall that patch-based sparse denoising
methods~\cite{elad2006image,mairal2008sparse} control the sparsity level on each patch by
upper-bounding the patch reconstruction error. We will borrow such an idea, and translate it into the convolutional setting.

For a noisy image $Y$, patch methods rely on a global objective of the form
\begin{equation}
    \begin{aligned}
        &\minimize_{\substack{ \{\alpha_i\}, X}}
             & &  \frac{\lambda}2 \|X -Y\|_2^2 + \sum_i \|\beta_i \|_0 \\
        & \text{ subject to }
             & & \forall i, \, \| D_l \beta_i - R_i X \|_2^2 \leq T,
    \end{aligned}
    \label{eq:pa:denoising}
\end{equation}
where $\beta_i$ is the sparse vector for the patch $R_i X$ and the upper-bound
$T$ over the patch reconstruction error is typically set to
$Cn^2\sigma_\text{noise}^2$, the assumed patch noise level (up to a
multiplicative constant).
This is typically solved via a block-coordinate descent algorithm, which means first initializing $X = Y$ and seeking the sparsest
$\alpha_i$ for each patch via the set of local problems
\begin{equation}
    \begin{aligned}
        & \minimize_{\substack{ \beta_i}}
        & &  \|\beta_i \|_0 \\
        & \text{ subject to } & & \| D_l \beta_i - R_i Y \|_2^2 \leq T, \\
    \end{aligned}
    \label{eq:pa:l2inf}
\end{equation}
which yields a reconstruction for each overlapping patch and, in turn, an intermediary global reconstruction $\frac1{n^2} \sum_i R_i^T D_L \beta_i$.
While state-of-the-art methods typically consider approximate solutions through
greedy pursuit algorithms, it is also possible to consider an $\ell_1$
relaxation of the same sparse coding problem. We will employ the latter option
in order to benefit from the resulting convexity of the problem.

The second stage of the block-coordinate descent algorithm consists in updating
the estimate of $X$, the restored image, by solving the least-square problem in
closed form~\cite{elad2006image} according to:
\begin{equation}
    X = \left(\lambda I + \sum R_i^T R_i \right)^{-1} \left( \lambda Y + \sum_i R_i^T D_L \beta_i \right),
    \label{eq:pa:averaging}
\end{equation}
essentially averaging the input signal $Y$ with the patch-averaging
estimate $ \frac1{n^2} \sum_i R_i^T D_L \beta_i$.

In order to bring this classic approach into a convolutional setting,
note that the CSC global representation $\Gamma$ can be decomposed
into its constituent \emph{needles}, and so $\sum_i \|\alpha_i\|_1 =
\|\Gamma\|_1 $.
Recalling the definitions and notations in
Section~\ref{sec:conv:sparse:coding}, a patch from the reconstructed image $R_i
X$ in the CSC model can be equivalently written as $R_i X = R_i D\Gamma = \Omega S_i \Gamma$.
With these elements, the problem in \eqref{eq:pa:denoising} can be naturally
transformed into
\begin{equation}
    \begin{aligned}
        & \minimize_{\substack{ \{\alpha_i\}, X}}
        & &  \frac{\lambda}2 \|X -Y\|_2^2 + \| \Gamma\|_1  \\
        & \text{ subject to } & &  \forall i, \;  \|\Omega S_i \Gamma - R_i X \|_2^2 \leq T. \\
    \end{aligned}
    \label{eq:csc:l2inf:pa}
\end{equation}
One might indeed adopt a similar block-coordinate descent strategy for this
problem as well. After an initialization of $X=Y$, the first step considers the
resulting $ \ell_{2, \infty} - \ell_1 $ formulation:
\begin{equation}
    \begin{aligned}
        & \minimize_{\substack{ \Gamma}}
        & &  \|\Gamma \|_1 \\
        & \text{ subject to } & &  \forall i, \;  \|\Omega S_i \Gamma - R_i Y \|_2^2 \leq T, \\
    \end{aligned}
    \label{eq:csc:l2inf}
\end{equation}
where the constraint on patch reconstruction considers the stripe dictionary.
Again, the second stage consists in updating the estimate of $X$ by solving the least-square problem
\begin{equation}
    X = \left(\lambda I + \sum_i R_i^T R_i \right)^{-1} \left( \lambda Y + \sum_i R_i^T \Omega S_i \Gamma \right). 
    \label{eq:pa:averaging:CSC}
\end{equation}
whose solution, since $\sum_i R_i^T \Omega S_i \Gamma = n^2 D \Gamma $ and since $ \sum_i R_i^T R_i = n^2 I $, boils
down to an average between the input image and the intermediary global
reconstruction $D\Gamma$.
In this manner, and similarly to the patch-averaging strategy, the trade-off
between sparsity and reconstruction is controlled locally via an upper-bound on
the reconstruction error of each individual patch.
However, while in the original method each vector $\beta_i$ encodes one patch
in disregard with other patches, now each needle $\alpha_i$ becomes part of
various stripes $S_i\Gamma$ and therefore contributes in various patches.
In other words, the classic patch-averaging approach performs these pursuit
independently, whereas this convolutional counterpart will need to update all
needles jointly.

In what follows, we show that this seemingly complex problem can in fact be
addressed by using traditional $\ell_1$ solvers such as  the Fast Iterative
Shrinkage-Tresholding Algorithm (FISTA)~\cite{beck2009fast} in conjunction with
the Parallel Proximal Algorithm (PPXA).

\subsection{Proposed algorithm}

PPXA is a generic convex optimization algorithm introduced by Combettes and
Pesquet~\cite{combettes2008proximal,combettes2011proximal} that extends the Douglas-Rachford
algorithm and aims to minimize an objective of the form
\begin{equation}
    \minimize_x \sum_i^N f_i(x),
    \label{eq:ppxa:generic:form}
\end{equation}
where each $f_i$ is a convex function that admits an easy-to-compute proximal
operator~\cite{parikh2014proximal,bauschke2011convex}.
Recall that the proximity operator $\text{prox}_{f_i}(y): \R^N \rightarrow
\R^N$ of $f_i$ is defined by
\begin{equation}
    \text{prox}_{f_i}(y) := \argmin_x f_i(x) + \nicefrac{1}{2} \| x - y\|^2_2.
    \label{eq:def:proximal:operator}
\end{equation}
In our context, PPXA offers a way to manage the explicit use of
overlapping stripes.
Indeed, by encapsulating each inequality constraint into its corresponding
indicator function, the objective in Equation~\eqref{eq:csc:l2inf} can be recast as
a sum, namely
\begin{equation}
    \begin{aligned}
        \minimize_\Gamma \sum_{i=1}^N \left(  \frac1{N} \|\Gamma \|_1 + \mathcal{I}_{ \{ \|\Omega S_i \Gamma - R_i Y \|^2_2 \leq T \} } \right),
    \end{aligned}
    \label{eq:csc:l2inf:sum:fi}
\end{equation}
where $ \mathcal{I}_{  \{ \| \Omega S_i \Gamma - R_i Y \|^2_2 \leq T \}} $
denotes the indicator function\footnote{The indicator function  $\mathcal{I}_S$
equals $0$ inside the set $S$ and $\infty$ elsewhere.} on the constraint
feasibility set.
The successful deployment of the PPXA algorithm for this problem depends on our ability
to compute, for each patch, the proximal operator
\begin{equation}
    \begin{aligned}
        \text{prox}_{f_i}(\Gamma) :=  \argmin_{\hat \Gamma} & \quad \|\hat \Gamma \|_1
                                 + \frac{1}{2 N \mu} \| \Gamma - \hat\Gamma \|_2^2 \\
                                  & + \mathcal{I}_{  \{ \|\Omega S_i \hat \Gamma - R_i Y \|^2_2 \leq T \} },
    \end{aligned}
    \label{eq:csc:l2inf:proximal:ind}
\end{equation}
with parameter $\mu$ scaling the least-square term.
%
%
The solution to the above problem is also the solution to a Lagrangian
\begin{equation}
    \argmin_{\hat \Gamma} \|\hat \Gamma \|_1 + \frac{1}{2 N \mu} \| \Gamma - \hat \Gamma \|_2^2 + \lambda_i^\ast \| R_i ( D \hat \Gamma - Y ) \|^2_2,
    \label{eq:csc:l2inf:proximal:lagrangian}
\end{equation}
in which the Lagrange multiplier is set to an optimal value $\lambda_i^\ast$: the \emph{smallest} Lagrange multiplier such
that the inequality constraint is satisfied.
Observe that, while transitioning from
Equation~\eqref{eq:csc:l2inf:proximal:ind} to
Equation~\eqref{eq:csc:l2inf:proximal:lagrangian}, we moved from $\Omega$ to
$D$, in order to pose the algorithm w.r.t. the global dictionary.
%
Fortunately, for a given Lagrangian multiplier $\lambda_i$, such objective can
be efficiently minimized by a proximal gradient method such as
(ISTA)~\cite{daubechies2004iterative} or its fast version
FISTA~\cite{beck2009fast}.
Indeed, denoting $g_i(\hat \Gamma, \lambda_i) :=  \frac1{2 N \mu} \|\Gamma - \hat \Gamma \|^2_2 + \lambda_i \| R_i( D \hat \Gamma - Y ) \|_2^2$,  ISTA and FISTA revolve around the update step
\begin{equation}
    \begin{aligned}
        \hat \Gamma^{(k+1)} =  \mathcal{S}_{t_k} \Big(
                   \hat \Gamma^{(k)} + t_k \frac{ \partial g_i}{\partial \hat \Gamma }(\hat \Gamma^{(k)}, \lambda_i ) \Big),\\
    \end{aligned}
    \label{csc:l2inf:FISTA:step}
\end{equation}
where $t_k$ denotes the step-size\footnote{For convergence, the step-size
$t_k$ must satisfy $t_k \leq \frac{1}{\lambda_\text{max}}$, where $\lambda_\text{max}$ denotes the maximum eigenvalue of $\nabla g_i$ which can
be approximated efficiently via the power method.}.
The dominant effort here is the evaluation of the gradient of $g_i$ with
respect to $\hat \Gamma$. This boils down to the computation of convolutions,
for which fast GPU implementations are available.
Running FISTA successively with warm-start initialization allows to estimate
the minimizer for different values of $\lambda_i$ with only few extra
iterations. This allows to use a binary-search scheme to estimate the
optimal Lagrange multiplier $\lambda_i^\ast$ which in turn provides the
solution to the proximal operator in Equation~\eqref{eq:csc:l2inf:proximal:ind}.

Armed with this procedure to compute the proximal operators, an iteration of
the PPXA algorithm boils down to the following steps:
\begin{enumerate}
    \item Compute the proximal operators for each patch
    \begin{equation}
        \begin{aligned}
            \forall i=1\dots N, \quad \hat \Gamma_{i}^{(l)} = \text{prox}_{f_i} (\Gamma_{i}^{(l)}),
        \end{aligned}
        \label{xxx}
    \end{equation}
    following the procedure described above. The evaluations can be carried out in parallel.
    \item Aggregate the solutions
        \begin{equation}
            \begin{aligned}
            \hat \Gamma^{(l)} = \frac{1}{N} \sum_i^N \hat \Gamma_{i}^{(l)}.
            \end{aligned}
            \label{}
        \end{equation}
    \item Update the estimate of $\Gamma$ along with the auxiliary variables $\Gamma_{i}$
            \begin{align}
                \forall i, \quad  \Gamma_{i}^{(l+1)} & =  \Gamma_{i}^{(n)} + \rho_l \left(2 \hat \Gamma^{(l)} - \Gamma^{(l)} - \hat \Gamma_{i}^{(l)} \right), \\
                \Gamma^{(l+1)}  & = \Gamma^{(l)} + \rho_l (\hat\Gamma^{(l)}  - \Gamma^{(l)}), \nonumber
            \label{}
            \end{align}
\end{enumerate}
where $\rho_l$ denotes the relaxation parameter~\footnote{To guaranty
convergence, the relaxation parameters $(\rho_l)$ must satisfy $\sum_{l\in\N}
\rho_l (2-\rho_l) = +\infty$.} on this iteration.
The sequence of sparse vector estimates $\Gamma^{(l)}$ is proven to
converge to the solution of the $\ell_{2,\infty} - \ell_1$ CSC
problem~\eqref{eq:csc:l2inf} \cite{combettes2008proximal}.
Note that using FISTA in conjunction with PPXA makes it possible to take full
advantage of GPU hardware and high-level libraries for fast convolutions, in
contrast with most sparse coding algorithm that operate in the Fourier domain
\cite{heide2015fast,wohlberg2014efficient}.

\subsection{Extension via weighted stripe dictionary}

\begin{figure}[!t]
	\centering
	\setlength\figurewidth{0.35\textwidth}
	\setlength\figureheight{0.20\textwidth}
%
%
\begin{tikzpicture}

\begin{axis}[%
width=0.951\figurewidth,
height=\figureheight,
at={(0\figurewidth,0\figureheight)},
scale only axis,
xmin=0,
xmax=1,
xlabel style={font=\color{white!15!black}},
xlabel={$\theta$},
ymin=0,
ymax=0.7,
axis background/.style={fill=white},
legend style={legend cell align=left, align=left, fill=none, draw=none}
]
\addplot [color=red]
  table[row sep=crcr]{%
0	0.671233661025983\\
0.05	0.632318796853118\\
0.1	0.574778852254317\\
0.15	0.552136716457857\\
0.2	0.476962287645994\\
0.25	0.430920230733513\\
0.3	0.397228108790076\\
0.35	0.358161844503145\\
0.4	0.325306095578929\\
0.45	0.300669153933545\\
0.5	0.262105198287655\\
0.55	0.238881783152957\\
0.6	0.222447473224087\\
0.65	0.191131801937804\\
0.7	0.174467713052683\\
0.75	0.15717904283284\\
0.8	0.124896186422896\\
0.85	0.115590653048501\\
0.9	0.103812421544325\\
0.95	0.0984530839237949\\
1	0.0937893371221044\\
};
\addlegendentry{$\|n^2 \bar D_l S_i \Gamma - R_i Y \|_2$}

\addplot [color=blue]
  table[row sep=crcr]{%
0	0.0971030693014666\\
0.05	0.0990625571573739\\
0.1	0.0889645363780638\\
0.15	0.0975342193455485\\
0.2	0.0989042923109837\\
0.25	0.0935245291801624\\
0.3	0.0999372712709953\\
0.35	0.0933836558244023\\
0.4	0.0873531729824051\\
0.45	0.0826342933270665\\
0.5	0.0904868148678619\\
0.55	0.0949670048412508\\
0.6	0.0929583746062494\\
0.65	0.0932807708004747\\
0.7	0.0880010240411081\\
0.75	0.0914880582737192\\
0.8	0.0817810686968104\\
0.85	0.0861873473464832\\
0.9	0.0855348467916863\\
0.95	0.0931388303491864\\
1	0.0937893371221044\\
};
\addlegendentry{$\| \Omega_\theta S_i \Gamma - R_i Y \|_2$}

\addplot [color=green, dashed]
  table[row sep=crcr]{%
0	0.1\\
0.05	0.1\\
0.1	0.1\\
0.15	0.1\\
0.2	0.1\\
0.25	0.1\\
0.3	0.1\\
0.35	0.1\\
0.4	0.1\\
0.45	0.1\\
0.5	0.1\\
0.55	0.1\\
0.6	0.1\\
0.65	0.1\\
0.7	0.1\\
0.75	0.1\\
0.8	0.1\\
0.85	0.1\\
0.9	0.1\\
0.95	0.1\\
1	0.1\\
};
\addlegendentry{$T$}

\end{axis}
\end{tikzpicture}%

    {\bf (a)}
    \vspace{1em}

	\setlength\figurewidth{0.18\textwidth}
	\setlength\figureheight{0.18\textwidth}
    \begin{minipage}[]{.22\textwidth}
        \begin{centering}

%
%
\definecolor{mycolor1}{rgb}{0.00000,0.44700,0.74100}%
\begin{tikzpicture}

\begin{axis}[%
width=0.951\figurewidth,
height=\figureheight,
at={(0\figurewidth,0\figureheight)},
scale only axis,
bar shift auto,
xmin=-0.2,
xmax=21.2,
xtick={\empty},
ymin=0,
ymax=34000,
axis background/.style={fill=white}
]
\addplot[ybar, bar width=3.1, fill=mycolor1, draw=black, area legend] table[row sep=crcr] {%
1	21586\\
2	1374\\
3	647\\
4	645\\
5	286\\
6	193\\
7	181\\
8	174\\
9	153\\
10	122\\
11	109\\
12	92\\
13	83\\
14	77\\
15	77\\
16	75\\
17	74\\
18	74\\
19	72\\
20	67\\
};
\addplot[forget plot, color=white!15!black] table[row sep=crcr] {%
-0.2	0\\
21.2	0\\
};
\end{axis}
\end{tikzpicture}%

            {\bf (b)} $\theta = 0.1$

        \end{centering}
    \end{minipage}
    \begin{minipage}[]{.22\textwidth}
        \begin{centering}

%
%
\definecolor{mycolor1}{rgb}{0.00000,0.44700,0.74100}%
\begin{tikzpicture}

\begin{axis}[%
width=0.951\figurewidth,
height=\figureheight,
at={(0\figurewidth,0\figureheight)},
scale only axis,
bar shift auto,
xmin=-0.2,
xmax=21.2,
xtick={\empty},
ymin=0,
ymax=34000,
axis background/.style={fill=white}
]
\addplot[ybar, bar width=3.1, fill=mycolor1, draw=black, area legend] table[row sep=crcr] {%
1	31258\\
2	25094\\
3	21696\\
4	16079\\
5	13482\\
6	10675\\
7	8579\\
8	4642\\
9	3452\\
10	3312\\
11	3091\\
12	2727\\
13	2675\\
14	2498\\
15	2273\\
16	2080\\
17	1983\\
18	1923\\
19	1337\\
20	1333\\
};
\addplot[forget plot, color=white!15!black] table[row sep=crcr] {%
-0.2	0\\
21.2	0\\
};
\end{axis}
\end{tikzpicture}%

            {\bf (c)} $\theta = 0.8$

        \end{centering}
    \end{minipage}
    \caption{
        Effect of replacing the stripe dictionary $\Omega$ with the convex combination
        $\Omega_\theta = (1-\theta) \Omega + \theta n^2 \bar D_l$ to
        sparse-code the image \texttt{barbara} after local contrast normalization.
        %
        {\bf (a)} The average reconstruction error
        $\| \Omega_\theta S_i \Gamma -R_i Y \|_2 $ (in blue) and the average
        Euclidean distance between patches and slices $\|n^2 \bar D_l S_i
        \Gamma - R_i Y \|_2$ (in red) as a function of $\theta$.
        In accordance to the inequality constraint, the reconstruction error remains
        below the threshold $T$.
        By construction, overlapping slices must be combined to approximate patches.
        However, as $\theta$ increases, individual slices $n^2 \bar D_l S_i
        \Gamma$ become increasingly similar to patches.
        {\bf (b)} and {\bf (c)} Number of non-zero coefficients
        for each of the 20 most commonly used atoms for $\theta = 0.1$
        and $\theta = 0.8$ respectively.
        As $\theta$ increases, more diverse local atoms are used.
    }
    \label{fig:weighted}
\end{figure}

The method described above for the $\ell_{2,\infty} - \ell_1$ formulation brings an
additional level of flexibility by offering a generic way to enforce a wider
range of structured sparsity.
Indeed, because the proposed method splits the global pursuit into
parallel pursuits on each stripe, a specific local structure can be imposed on
individual stripes.
This can be achieved naturally by simply weighting the columns of the stripe
dictionary, so as to relatively promote or penalize the use of certain atoms.
Formally this corresponds to
\begin{equation}
    \begin{aligned}
        & \minimize_{\substack{ \Gamma}}
        & &  \|\Gamma \|_1 \\
        & \text{ subject to } & &  \forall i, \;  \|\Omega W_i S_i \Gamma - R_i Y \|_2^2 \leq T, \\
    \end{aligned}
    \label{eq:csc:l2inf:reweigthed}
\end{equation}
where $W_i$ denotes the weighting diagonal matrix relative to the $i$-th
patch\footnote{Note that to be consistent with the global CSC model, the set
of matrices $\{W_i\}$ must satisfy the relation $D = \frac{1}{n^2}
\sum R_i^T \Omega W_i S_i$}.
In the context of the proposed algorithm, this boils down to an extra weighting
within each FISTA iterations.

One particularly interesting application of such strategy consists in combining the CSC
and patch-averaging models.
Such a combination allows for the benefits of both the global and local models,
which respective performances on various tasks are increasingly well
understood.
From an analysis stand point, being able to examine the entire spectrum
separating the CSC model and the patch-averaging approach is highly valuable,
as the understand of their precise inter-relation has been of interest to the
image processing community~\cite{carrera2017sparse}.
With the proposed method, such combination can be achieved via a mere
re-weighting of the columns that amounts to replacing the stripe dictionary
with the convex combination
\begin{equation}
    \Omega_\theta = (1-\theta) \Omega + \theta n^2 \bar D_l,
    \label{}
\end{equation}
with $0 \leq \theta \leq 1$ and with $\bar D_l$ denoting the local dictionary
padded with zero columns.
The parameter $\theta$ allows to regulate the level of
patch aggregation that has been proven to be critical in denoising
problems~\cite{carrera2017sparse}.
Setting $\theta=0$ corresponds to the $\ell_1 - \ell_{2,\infty}$ CSC formulation
above.
By increasing  $\theta$, filters which locations are shifted with respect to the
patch are increasingly penalized.
Setting $\theta = 1$ is synonymous with the patch averaging strategy in which the
reconstruction relies exclusively on $D_l$ and none of its shifted atoms.
The behavior of the resulting problem
\begin{equation}
    \begin{aligned}
        & \minimize_{\substack{ \Gamma}}
        & &  \|\Gamma \|_1 \\
        & \text{ subject to } & &  \forall i, \;  \|\Omega_\theta S_i \Gamma - R_i Y \|_2^2 \leq T, \\
    \end{aligned}
    \label{eq:csc:l2inf:reweigthed:pa}
\end{equation}
and the structure of its solution are examined in Figure~\ref{fig:weighted}.
Figure~\ref{fig:weighted} {\bf (a)} shows the average representation error $\|
\Omega_\theta S_i -R_i Y \|_2$ (in blue) and the average Euclidean
distance between individual slices and patches $\|n^2 \bar D_l S_i \Gamma - R_i
Y \|$  (in red) as a functions of the parameter $\theta$.
In accordance to the inequality constraints in Problem~\eqref{eq:csc:l2inf}, the
patch reconstruction error stays below the threshold $T$.
On the other hand, and as expected, the Euclidean distance between slices and
patches is above the threshold $T$, as it is the combination of
overlapping slices, rather than an isolated slice, that approximates the patch.
However, as $\theta$ increases, the term $\Omega_\theta S_i \Gamma$ in the
representation error in Problem~\eqref{eq:csc:l2inf:reweigthed:pa} is
increasingly similar to a slice $n^2 D_l \alpha $.
This in turn constrains the individual slices to better approximate the
corresponding patch.
Additionally, the constraint affects the diversity of local atoms used in the
global representation.
Indeed, Figure~\ref{fig:weighted} {\bf (b)} and {\bf (c)} show the number of non-zero
coefficients for $\theta = 0.1$ and $\theta = 0.8$ respectively.
Even though the formulations for $\theta = 0.1$ and $\theta = 0.8$ are both
consistent with the global CSC model, the latter leads to more diverse local
atoms being used.
We will see next how this behavior brings additional practical benefits.

\subsection{Experiments}

\begin{table*}[t]
    \centering
    \begin{tabular}{|l|c|c|c|c|c|c|c|}
        \hline
                          &  \texttt{barbara} & \texttt{lena} & \texttt{boat} & \texttt{hill} & \texttt{house}  & \texttt{couple} & \texttt{man}   \\
        \hline
        Heide et al. \cite{heide2015fast}              & 11.00 & 11.77 & 10.29 & 10.37 &  10.18 & 11.99  & 11.60  \\
        Papyan et al. \cite{papyan2017convolutional}   & { 11.67} & 11.92 & 10.33 & { \bf 10.66}  & { 10.56} & 12.25  & { 11.84}  \\
        $\ell_{1} - \ell_{2,\infty}$                   & 11.65  & { 11.99} & { 10.39} & 10.55 & 10.60 & { 12.34}   &  11.91 \\
        weighted $\ell_{1} - \ell_{2,\infty}$,    & {\bf 11.78}  &  {\bf 12.13} & {\bf 10.58} & { 10.65} & {\bf 10.62} & {\bf 12.46} & {\bf 11.98} \\
        \hline
        \hline
        Papyan et al. \cite{papyan2017convolutional} , image specific $D_l$  & 15.20  & {\bf 12.35} & 11.60 & 10.90 & 11.70 & 12.41 & 11.71 \\
        weighted $\ell_{1} - \ell_{2,\infty}$, image specific $D_l$   &{\bf 16.11}  & { 12.29} & {\bf 11.93} & {\bf 11.22} & {\bf 12.13} & {\bf 13.16} &{\bf  12.05} \\
        \hline
    \end{tabular}
    \caption{
        Image inpainting. The  $\ell_{2} - \ell_1$ based method
        of~\cite{papyan2017convolutional} and~\cite{heide2015fast} are compared
        to the proposed methods: the $\ell_{2,\infty} - \ell_1$ formulation and
        the formulation with a weighted stripe dictionary.
        In the first block, the local dictionary is pretrained from the
        \texttt{fruit} dataset using the method
        from~\cite{papyan2017convolutional}.
        The $\ell_{2,\infty}$ prior improves over the best $\ell_2 - \ell_1$
        based method formulation.
        The weighted stripe dictionary $\Omega_\theta$ with $\theta = 0.8$
        brings an additional improvement in PSNR over the standard
        $\ell_{2,\infty}$ by promoting patch averaging.
        In the result reported in the second block, the local dictionary used
        is learned from the corrupted image.
        In this scenario, the weighted $\ell_{2,\infty} - \ell_1$ formulation
        with $\theta = 0.8$ generally
        outperforms~\cite{papyan2017convolutional}.
    }
    \label{table:inpainting:psnr}
\end{table*}

We illustrate the behavior of the $\ell_{2,\infty} - \ell_1$ formulation and its weighted variant on the
classic problem of image inpainting.
Let us consider an image $X$ and a diagonal binary matrix $M$, which masks the
entries in $X$ in which $M_{i,i} = 0$.
Image inpainting is the process of filling in missing areas in an image in a
realistic manner.
That is, given the corrupted image $Y=MX$, the task consists in estimating the
original signal $X$.

Estimating the original signal via the $\ell_{2,\infty} - \ell_{1}$ CSC
requires solving the problem
\begin{equation}
    \begin{aligned}
        & \minimize_{\substack{ \Gamma}}
        & &  \|\Gamma \|_1 \\
        & \text{ subject to } & &  \forall i, \;  \| R_i ( M D \Gamma -  Y ) \|_2^2 \leq T_i, \\
    \end{aligned}
    \label{eq:csc:l2inf:inpainting}
\end{equation}
where the constraint on the representation accuracy incorporates the binary
matrix $M,$ and where the threshold $T_i$ is set on a patch-by-patch
basis to reflect the varying numbers of active pixels in each patch.
Minimizing this objective requires only a slight modification of the algorithm
described above, namely incorporating the mask into the function $g_i$ and its
gradient.
The PPXA relaxation parameter is set to $\lambda_l = 1.6$ and the scaling
factor in the proximal operator is set to $\mu = 100$.

Table~\ref{table:inpainting:psnr} contains the peak signal-to-noise ratio
(PSNR) on a set of publicly available standard test images.
In the first block of experiments, we adopt the benchmark framework proposed
in~\cite{heide2015fast}.
In particular, the local contrast normalization is applied to the input image
and the local dictionary is pretrained from the \texttt{fruit} dataset, using
the method from~\cite{papyan2017convolutional}.
The method based on the $\ell_{2,\infty} - \ell_{1}$ formulation outperforms
the method proposed in~\cite{heide2015fast} and slightly improves over the
slice-based approach of~\cite{papyan2017convolutional}.
The best performance are obtained in general with the weighted $\ell_{2,\infty}
- \ell_1$ ($\theta=0.8$), which formulation tends to promote an averaging of
similar local estimates.
Significant additional improvements are achieved when learning the local
dictionary $D_l$ from the corrupted image.
The second block in Table~\ref{table:inpainting:psnr} contains the inpainting
PSNR obtained in this scenario for the sliced based
method~\cite{papyan2017convolutional} and for the weighted $\ell_{2,\infty}-
\ell_1$ used along the dictionary update proposed in~\cite{plaut2018mpcsc}.
In this context, the weighting of the stripe dictionary is particularly
beneficial as it encourages more atoms to be used and therefore updated  (see
Figure~\ref{fig:weighted}).

\section{Conclusion}
\label{sec:conclusion}

While enjoying a renewed interest in recent years, the CSC model has been
almost exclusively considered in its $\ell_2 - \ell_1$ formulation.
In the present work, we expanded the formulations for the CSC  with two alternative
formulations, namely the $\ell_2 - \ell_{1,\infty}$  and $\ell_{2,\infty} -
\ell_{1}$  formulations in which mixed-norms, alter how the spatial
distributions of non-zero coefficients are controlled.
For both formulations, we derived algorithms that rely on the ADMM and PPXA
algorithms.
The algorithms are simple, easy to implement and can take full advantage of
fast GPU implementation of the convolution operator.
Their convergence naturally follows from the convergence properties of the two
standard convex optimization framework they build on.
We examined the performance and behavior of the proposed formulation on two
image processing tasks: inpainting and cartoon texture separation.
Furthermore, we showed that the  $\ell_{2,\infty} - \ell_1$ formulation in
particular opens the door to a wide variety of structured sparsity, that could
bring additional practical benefits while still being consistent with the CSC
model.
An interesting example of such structured sparsity was offered in the
combination of the CSC and patch-averaging models, showing that such a mixture
provides improved performance.
Finally, we envision that similar combinations of global and local sparse
priors, within the proposed unifying framework, will allow to further benefits
in several other restoration problems.

\section{Acknowledgements}

The research leading to these results has received funding in part from the
European Research Council under EU’s 7th Framework Program, ERC under Grant
320649, and in part by Israel Science Foundation (ISF) grant no. 1770/14.

\newpage

\bibliographystyle{IEEEbib}
\bibliography{main_ppxa}

\begin{thebibliography}{10}

\bibitem{bookSparse}
Michael Elad,
\newblock {\em Sparse and Redundant Representations - From Theory to
  Applications in Signal and Image Processing.},
\newblock Springer, 2010.

\bibitem{elad2006image}
Michael Elad and Michal Aharon,
\newblock ``Image denoising via sparse and redundant representations over
  learned dictionaries,''
\newblock {\em IEEE Transactions on Image processing}, vol. 15, no. 12, pp.
  3736--3745, 2006.

\bibitem{mairal2009non}
Julien Mairal, Francis Bach, Jean Ponce, Guillermo Sapiro, and Andrew
  Zisserman,
\newblock ``Non-local sparse models for image restoration,''
\newblock in {\em Computer Vision, 2009 IEEE 12th International Conference on}.
  IEEE, 2009, pp. 2272--2279.

\bibitem{romano2014single}
Yaniv Romano, Matan Protter, and Michael Elad,
\newblock ``Single image interpolation via adaptive non-local sparsity-based
  modeling,''
\newblock {\em IEEE Transactions on Image Processing}, 2014.

\bibitem{yang2010image}
Jianchao Yang, John Wright, Thomas~S Huang, and Yi~Ma,
\newblock ``Image super-resolution via sparse representation,''
\newblock {\em IEEE transactions on image processing}, vol. 19, no. 11, pp.
  2861--2873, 2010.

\bibitem{yu2012solving}
Guoshen Yu, Guillermo Sapiro, and St{\'e}phane Mallat,
\newblock ``Solving inverse problems with piecewise linear estimators: From
  gaussian mixture models to structured sparsity,''
\newblock {\em IEEE Transactions on Image Processing}, vol. 21, no. 5, pp.
  2481--2499, 2012.

\bibitem{dong2013nonlocally}
Weisheng Dong, Lei Zhang, Guangming Shi, and Xin Li,
\newblock ``Nonlocally centralized sparse representation for image
  restoration,''
\newblock {\em IEEE Transactions on Image Processing}, vol. 22, no. 4, pp.
  1620--1630, 2013.

\bibitem{aharon2006rm}
Michal Aharon, Michael Elad, and Alfred Bruckstein,
\newblock ``K-svd: An algorithm for designing overcomplete dictionaries for
  sparse representation,''
\newblock {\em IEEE Transactions on signal processing}, vol. 54, no. 11, pp.
  4311--4322, 2006.

\bibitem{engan1999method}
Kjersti Engan, Sven~Ole Aase, and J~Hakon Husoy,
\newblock ``Method of optimal directions for frame design,''
\newblock in {\em Acoustics, Speech, and Signal Processing, 1999. Proceedings.,
  1999 IEEE International Conference on}. IEEE, 1999, vol.~5, pp. 2443--2446.

\bibitem{mairal2008sparse}
Julien Mairal, Michael Elad, and Guillermo Sapiro,
\newblock ``Sparse representation for color image restoration,''
\newblock {\em IEEE Transactions on image processing}, vol. 17, no. 1, pp.
  53--69, 2008.

\bibitem{sulam2015expected}
Jeremias Sulam and Michael Elad,
\newblock ``Expected patch log likelihood with a sparse prior,''
\newblock in {\em International Workshop on Energy Minimization Methods in
  Computer Vision and Pattern Recognition}. Springer, 2015, pp. 99--111.

\bibitem{papyan2016multi}
Vardan Papyan and Michael Elad,
\newblock ``Multi-scale patch-based image restoration,''
\newblock {\em IEEE Transactions on image processing}, vol. 25, no. 1, pp.
  249--261, 2016.

\bibitem{mairal2008learning}
Julien Mairal, Guillermo Sapiro, and Michael Elad,
\newblock ``Learning multiscale sparse representations for image and video
  restoration,''
\newblock {\em Multiscale Modeling \& Simulation}, vol. 7, no. 1, pp. 214--241,
  2008.

\bibitem{sulam2014image}
Jeremias Sulam, Boaz Ophir, and Michael Elad,
\newblock ``Image denoising through multi-scale learnt dictionaries,''
\newblock in {\em Image Processing (ICIP), 2014 IEEE International Conference
  on}. IEEE, 2014, pp. 808--812.

\bibitem{zoran2011learning}
Daniel Zoran and Yair Weiss,
\newblock ``From learning models of natural image patches to whole image
  restoration,''
\newblock in {\em Computer Vision (ICCV), 2011 IEEE International Conference
  on}. IEEE, 2011, pp. 479--486.

\bibitem{grosse2012shift}
Roger Grosse, Rajat Raina, Helen Kwong, and Andrew~Y Ng,
\newblock ``Shift-invariance sparse coding for audio classification,''
\newblock {\em arXiv preprint arXiv:1206.5241}, 2012.

\bibitem{thiagarajan2008shift}
Jayaraman Thiagarajan, Karthikeyan Ramamurthy, and Andreas Spanias,
\newblock ``Shift-invariant sparse representation of images using learned
  dictionaries,''
\newblock in {\em Machine Learning for Signal Processing, 2008. MLSP 2008. IEEE
  Workshop on}. IEEE, 2008, pp. 145--150.

\bibitem{rusu2014explicit}
Cristian Rusu, Bogdan Dumitrescu, and Sotirios~A Tsaftaris,
\newblock ``Explicit shift-invariant dictionary learning,''
\newblock {\em IEEE Signal Processing Letters}, vol. 21, no. 1, pp. 6--9, 2014.

\bibitem{bristow2013fast}
Hilton Bristow, Anders Eriksson, and Simon Lucey,
\newblock ``Fast convolutional sparse coding,''
\newblock in {\em Proceedings of the IEEE Conference on Computer Vision and
  Pattern Recognition}, 2013, pp. 391--398.

\bibitem{heide2015fast}
Felix Heide, Wolfgang Heidrich, and Gordon Wetzstein,
\newblock ``Fast and flexible convolutional sparse coding,''
\newblock in {\em Proceedings of the IEEE Conference on Computer Vision and
  Pattern Recognition}, 2015, pp. 5135--5143.

\bibitem{kong2014fast}
Bailey Kong and Charless~C. Fowlkes,
\newblock ``Fast convolutional sparse coding (fcsc),''
\newblock {\em Department of Computer Science, University of California,
  Irvine, Tech. Rep}, vol. 3, 2014.

\bibitem{wohlberg2014efficient}
Brendt Wohlberg,
\newblock ``Efficient convolutional sparse coding,''
\newblock in {\em Acoustics, Speech and Signal Processing (ICASSP), 2014 IEEE
  International Conference on}. IEEE, 2014, pp. 7173--7177.

\bibitem{gu2015convolutional}
Shuhang Gu, Wangmeng Zuo, Qi~Xie, Deyu Meng, Xiangchu Feng, and Lei Zhang,
\newblock ``Convolutional sparse coding for image super-resolution,''
\newblock in {\em Proceedings of the IEEE International Conference on Computer
  Vision}, 2015, pp. 1823--1831.

\bibitem{yellin2017blood}
Florence Yellin, Benjamin~D. Haeffele, and Ren{\'e} Vidal,
\newblock ``Blood cell detection and counting in holographic lens-free imaging
  by convolutional sparse dictionary learning and coding,''
\newblock in {\em Biomedical Imaging (ISBI 2017), 2017 IEEE 14th International
  Symposium on}. IEEE, 2017, pp. 650--653.

\bibitem{serrano2016convolutional}
Ana Serrano, Felix Heide, Diego Gutierrez, Gordon Wetzstein, and Belen Masia,
\newblock ``Convolutional sparse coding for high dynamic range imaging,''
\newblock in {\em Computer Graphics Forum}. Wiley Online Library, 2016,
  vol.~35, pp. 153--163.

\bibitem{papyan2017working}
Vardan Papyan, Jeremias Sulam, and Michael Elad,
\newblock ``Working locally thinking globally: Theoretical guarantees for
  convolutional sparse coding,''
\newblock {\em IEEE Transactions on Signal Processing}, vol. 65, no. 21, pp.
  5687--5701, 2017.

\bibitem{plaut2018mpcsc}
Elad Plaut and Raja Giryes,
\newblock ``Matching pursuit based convolutional sparse coding,''
\newblock in {\em Acoustics, Speech and Signal Processing (ICASSP), 2018 IEEE
  International Conference on}. IEEE, 2018, IEEE SigPort.

\bibitem{wohlberg2017convolutional}
Brendt Wohlberg,
\newblock ``Convolutional sparse coding with overlapping group norms,''
\newblock {\em arXiv preprint arXiv:1708.09038}, 2017.

\bibitem{papyan2017convolutional}
Vardan Papyan, Yaniv Romano, Michael Elad, and Jeremias Sulam,
\newblock ``Convolutional dictionary learning via local processing.,''
\newblock in {\em ICCV}, 2017, pp. 5306--5314.

\bibitem{kowalski2009sparse}
Matthieu Kowalski,
\newblock ``Sparse regression using mixed norms,''
\newblock {\em Applied and Computational Harmonic Analysis}, vol. 27, no. 3,
  pp. 303--324, 2009.

\bibitem{boyd2011distributed}
Stephen Boyd, Neal Parikh, Eric Chu, Borja Peleato, Jonathan Eckstein, et~al.,
\newblock ``Distributed optimization and statistical learning via the
  alternating direction method of multipliers,''
\newblock {\em Foundations and Trends{\textregistered} in Machine learning},
  vol. 3, no. 1, pp. 1--122, 2011.

\bibitem{kelley1999iterative}
Carl~T Kelley,
\newblock {\em Iterative methods for optimization}, vol.~18,
\newblock Siam, 1999.

\bibitem{duchi2008efficient}
John Duchi, Shai Shalev-Shwartz, Yoram Singer, and Tushar Chandra,
\newblock ``Efficient projections onto the l 1-ball for learning in high
  dimensions,''
\newblock in {\em Proceedings of the 25th international conference on Machine
  learning}. ACM, 2008, pp. 272--279.

\bibitem{chan2011augmented}
Stanley~H Chan, Ramsin Khoshabeh, Kristofor~B Gibson, Philip~E Gill, and
  Truong~Q Nguyen,
\newblock ``An augmented lagrangian method for total variation video
  restoration,''
\newblock {\em IEEE Transactions on Image Processing}, vol. 20, no. 11, pp.
  3097--3111, 2011.

\bibitem{beck2009fast}
Amir Beck and Marc Teboulle,
\newblock ``A fast iterative shrinkage-thresholding algorithm for linear
  inverse problems,''
\newblock {\em SIAM journal on imaging sciences}, vol. 2, no. 1, pp. 183--202,
  2009.

\bibitem{combettes2008proximal}
Patrick~L Combettes and Jean-Christophe Pesquet,
\newblock ``A proximal decomposition method for solving convex variational
  inverse problems,''
\newblock {\em Inverse problems}, vol. 24, no. 6, pp. 065014, 2008.

\bibitem{combettes2011proximal}
Patrick~L Combettes and Jean-Christophe Pesquet,
\newblock ``Proximal splitting methods in signal processing,''
\newblock in {\em Fixed-point algorithms for inverse problems in science and
  engineering}, pp. 185--212. Springer, 2011.

\bibitem{parikh2014proximal}
Neal Parikh, Stephen Boyd, et~al.,
\newblock ``Proximal algorithms,''
\newblock {\em Foundations and Trends{\textregistered} in Optimization}, vol.
  1, no. 3, pp. 127--239, 2014.

\bibitem{bauschke2011convex}
Heinz~H Bauschke, Patrick~L Combettes, et~al.,
\newblock {\em Convex analysis and monotone operator theory in Hilbert spaces},
  vol. 408,
\newblock Springer, 2011.

\bibitem{daubechies2004iterative}
Ingrid Daubechies, Michel Defrise, and Christine De~Mol,
\newblock ``An iterative thresholding algorithm for linear inverse problems
  with a sparsity constraint,''
\newblock {\em Communications on Pure and Applied Mathematics: A Journal Issued
  by the Courant Institute of Mathematical Sciences}, vol. 57, no. 11, pp.
  1413--1457, 2004.

\bibitem{carrera2017sparse}
Diego Carrera, Giacomo Boracchi, Alessandro Foi, and Brendt Wohlberg,
\newblock ``Sparse overcomplete denoising: aggregation versus global
  optimization,''
\newblock {\em IEEE Signal Processing Letters}, vol. 24, no. 10, pp.
  1468--1472, 2017.

\end{thebibliography}

\appendix

\section{Appendix}\label{sec:appendix}

\newtheorem{myprop}{Proposition}

\begin{myprop}
    For a point $y$ and the $\ell_1$-ball of radius $r$, $\mathcal{B}_r :=\{ x,
    \text{s.t.} \|x\|_1 \leq r \}$, the distance between $y$ and the ball
        \begin{equation*}
            d(y, \mathcal{B}_r) := \inf \left\{ \|x - y \|_2, \; | \; x \in \mathcal{B}_r  \right\},
        \end{equation*}
    is a convex function of the ball radius $r$.
    \label{prop:conv}
\end{myprop}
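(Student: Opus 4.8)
The plan is to exhibit $d(y,\mathcal{B}_r)$ as a partial minimization of a jointly convex function and then invoke the standard fact that infimizing a convex function over one block of variables yields a convex function of the remaining variables. Concretely, I would introduce the function
\[
F(x,r) := \|x-y\|_2 + \mathcal{I}_{\{\|x\|_1 \le r\}},
\]
defined on $\R^N \times \R$. Then by definition $d(y,\mathcal{B}_r) = \inf_{x} F(x,r)$. So it suffices to show two things: first, that $F$ is jointly convex in $(x,r)$; second, that the partial infimum over $x$ of a jointly convex function is convex in $r$.

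For the first point, $\|x-y\|_2$ is convex in $x$ alone, hence (trivially) jointly convex in $(x,r)$. The only real content is that the indicator $\mathcal{I}_{\{\|x\|_1 \le r\}}$ is jointly convex, equivalently that the set $C := \{(x,r) : \|x\|_1 \le r\}$ is convex in $\R^N \times \R$. This is immediate: it is the epigraph of the convex function $x \mapsto \|x\|_1$ (intersected with nothing else), and epigraphs of convex functions are convex; alternatively one checks directly that if $\|x_1\|_1 \le r_1$ and $\|x_2\|_1 \le r_2$ then for $\theta \in [0,1]$ the triangle inequality and homogeneity give $\|\theta x_1 + (1-\theta)x_2\|_1 \le \theta r_1 + (1-\theta) r_2$. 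Hence $C$ is convex and $F$, being a sum of two jointly convex functions, is jointly convex.

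For the second point, the general principle is: if $F(x,r)$ is jointly convex, then $g(r) := \inf_x F(x,r)$ is convex (wherever it is finite, and here it is always finite and nonnegative since one can always take $x = 0 \in \mathcal{B}_r$ for $r \ge 0$, giving $g(r) \le \|y\|_2$). I would prove this directly: fix $r_1, r_2$ and $\theta \in [0,1]$, and for any $\varepsilon > 0$ pick $x_1, x_2$ with $F(x_j, r_j) \le g(r_j) + \varepsilon$. Then joint convexity gives
\[
g(\theta r_1 + (1-\theta) r_2) \le F(\theta x_1 + (1-\theta)x_2, \, \theta r_1 + (1-\theta)r_2) \le \theta F(x_1,r_1) + (1-\theta) F(x_2,r_2) \le \theta g(r_1) + (1-\theta) g(r_2) + \varepsilon,
\]
and letting $\varepsilon \to 0$ concludes. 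Finally I would note the domain issue: the statement concerns $r \ge 0$ (for $r < 0$ the ball is empty and $d = +\infty$), and convexity on $[0,\infty)$ follows from the above since that is where $g$ is finite.

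I do not expect a genuine obstacle here — the whole argument is a textbook application of "partial minimization preserves convexity." The only place where one must be slightly careful is bookkeeping of the domain: making sure the indicator is handled as an extended-real-valued convex function so that the sum and the infimum are well defined, and restricting attention to $r \ge 0$ where everything is finite. The epigraph observation is the crux, and it is a one-liner.
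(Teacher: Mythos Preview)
Your proposal is correct and takes essentially the same approach as the paper: both arguments use the $\ell_1$ triangle inequality to show that convex combinations of feasible points remain feasible (your ``epigraph of $\|\cdot\|_1$ is convex'' is exactly the paper's inclusion $\theta \mathcal{B}_{r_1} + (1-\theta)\mathcal{B}_{r_2} \subset \mathcal{B}_{\theta r_1 + (1-\theta) r_2}$), then invoke convexity of the Euclidean norm and pass to the infimum. The only difference is packaging---you frame it via the general ``partial minimization preserves joint convexity'' principle with an $\varepsilon$-argument, whereas the paper carries out the same steps by hand using the (attained) nearest points; your version is slightly more careful in that it does not tacitly assume the infimum is achieved.
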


\begin{proof}

From the $\ell_1$-norm triangle inequality, it comes that for any
convex combination of two radii $ \theta r_1 + (1-\theta) r_2$, with $0 \leq
\theta \leq 1$, we have the inclusion
$$ \theta \mathcal{B}_{r_1} + (1-\theta) \mathcal{B}_{r_2}  \subset
\mathcal{B}_{\theta r_1 + (1-\theta) r_2},$$
where $ \theta \mathcal{B}_{r_1}$ denotes the set of points
$ \{  \theta x_1  | x_1 \in \mathcal{B}_{r_1} \}.$
In particular, for the nearest points to $y$ in $\mathcal{B}_{r_1}$ and $\mathcal{B}_{r_2}$ respectively, \ie for $x_1
\in \mathcal{B}_{r_1}$ such that $\|y-x_1\|_2 = d(y, \mathcal{B}_{r_1})$
and $x_2 \in \mathcal{B}_{r_2}$ such that $\|y-x_2\|_2 = d(y,
\mathcal{B}_{r_2})$, we have
$$\theta x_1 + (1-\theta) x_2 \in \mathcal{B}_{\theta r_1 + (1-\theta) r_2},$$
and therefore
$$\|y - ( \theta x_1 + (1-\theta) x_2) \|_2 \geq
d(y, \mathcal{B}_{\theta r_1 + (1-\theta) r_2}).$$
Finally, from the Euclidean norm  triangle inequality, it comes
that
$$\theta d(y,\mathcal{B}_{r_1}) + (1-\theta) d(y, \mathcal{B}_{r_2})
\geq  d(y, \mathcal{B}_{\theta r_1 + (1-\theta) r_2})$$
which proves that $r
\mapsto d(y, \mathcal{B}_r )$ is convex.

\end{proof}

\end{document}